\theoremstyle{plain}
\newtheorem{DE}{Definition}%[section]
\newtheorem{defeng}[DE]{Definition}
\newtheorem{theorem}[DE]{Theorem}
\newtheorem{lemma}[DE]{Lemma}
\newtheorem{remark}[DE]{Remark}
\newcommand{\cC}{\mathcal{C}}
\newcommand{\cF}{\mathcal{F}}
\newcommand{\cO}{\mathcal{O}}
\newcommand{\es}{\varnothing}
\newcommand{\IN}{\mathbb{N}}
\newcommand{\sm}{\setminus}
\newcommand{\preq}{\preccurlyeq}
\newcommand{\rank}{\operatorname{rank}}
\newcommand{\cutrk}[1]{\operatorname{cutrk}_{#1}}
\newcommand{\width}{\operatorname{width}}
\newcommand{\rw}{\operatorname{rw}}
\begin{document}

\title{On rank-width of (diamond, even hole)-free graphs}

\author{%
     Isolde Adler\affiliationmark{1}
\and Ngoc Khang Le\affiliationmark{2,3}%
     \thanks{Partially supported by ANR project Stint
       under reference ANR-13-BS02-0007 and by the LABEX MILYON
       (ANR-10-LABX-0070) of Universit\'e de Lyon, within the program
       ``Investissements d'Avenir'' (ANR-11-IDEX-0007) operated by the
       French National Research Agency (ANR).
     } 
\and Haiko M\"{u}ller\affiliationmark{1} \\
\and Marko Radovanovi\'c\affiliationmark{4}%
     \thanks{Partially supported by Serbian Ministry of Education, Science
       and Technological Development project 174033.
     }
\and Nicolas Trotignon\affiliationmark{2}
\and Kristina Vu\v{s}kovi\'c\affiliationmark{1,5}%
     \thanks{Partially supported by EPSRC grant EP/N019660/1,
       by Serbian Ministry of Education, Science and
       Technological Development projects 174033 and III44006.
     }
}

\affiliation{
  School of Computing, University of Leeds, United Kingdom \\
  CNRS, ENS de Lyon, LIP, France \\
  Universit\'e de Lyon, France \\
  Faculty of Mathematics, University of Belgrade, Serbia \\
  Faculty of Computer Science (RAF), Union University, Serbia
}

\keywords{even-hole-free graph, (diamond, even hole)-free graph, 
  clique cutset, clique-width, rank-width}

\received{2016-12-5}
\revised{2017-8-1}
\accepted{2017-8-3}
\publicationdetails{19}{2017}{1}{24}{2575}

\maketitle

\begin{abstract}
  We present a class of (diamond, even hole)-free graphs with no
  clique cutset that has unbounded rank-width.
        
  In general, even-hole-free graphs have unbounded rank-width, because chordal
  graphs are even-hole-free. A.\,A.~da Silva, A.~Silva and C.~Linhares-Sales
  (2010) showed that planar even-hole-free graphs have bounded rank-width, and
  N.\,K.~Le (2016) showed that even-hole-free graphs with no star cutset have
  bounded rank-width. A natural question is to ask, whether even-hole-free
  graphs with no clique cutsets have bounded rank-width. Our result gives a
  negative answer. Hence we cannot apply the meta-theorem by Courcelle,
  Makowsky and Rotics, which would provide efficient algorithms for a large
  number of problems, including the maximum independent set problem, whose
  complexity remains open for (diamond, even hole)-free graphs.
\end{abstract}

\section{Introduction}

In a graph $G$ a \emph{hole} is a chordless cycle of length at least four. A
hole is \emph{even} or \emph{odd} depending on the parity of the size of its
vertex set. An \emph{$n$-hole} is a hole on $n$ vertices. A graph $G$
\emph{contains} a graph $F$, if $F$ is isomorphic to an induced subgraph of
$G$. $G$ is \emph{$F$-free} if it does not contain $F$, and for a family of
graphs $\cF$, $G$ is \emph{$\cF$-free} if for every $F\in \cF$, $G$ does not
contain $F$. A \emph{diamond} is the graph obtained by removing one edge from
a complete graph on four vertices. In this paper we study (diamond, even
hole)-free graphs.

Even-hole-free graphs have been studied considerably in the last two decades
(see surveys \cite{kv-survey1, kv-survey2}), and yet some of the key
algorithmic questions remain open for this class. Finding a largest (weighted)
clique in an even-hole-free graph can be done in polynomial time. As observed
by Farber \cite{farber}, 4-hole-free graphs have $\cO(n^2)$ maximal cliques,
so one can list them all in polynomial time. One can do better for
even-hole-free graphs, by exploiting structural properties of the class. In
\cite{daSV} it is shown that every even-hole-free graph has a vertex whose
neighbourhood is \emph{chordal} (i.e.\ hole-free), and in \cite{actv} it is
shown that an ordering of such vertices can be found using LexBFS, resulting
in an $\cO(nm)$-time algorithm for maximum weighted clique problem for
even-hole-free graphs. This algorithm is in fact robust: for any input graph
$G$, it either outputs a maximum weighed clique of $G$ or a certificate that
$G$ is not even-hole-free. Even-hole-free graphs can be recognized in
polynomial time, as first shown in \cite{cckv-ehfrecognition}, with currently
best complexity of $\cO(n^{11})$ \cite{cl}. This result is based on a
decomposition theorem for even-hole-free graphs from \cite{daSVehfdecomp} that
states that every even-hole-free graph is either simple in some sense, or has
a star cutset or a 2-join. In \cite{achrs} it is shown that every
even-hole-free graph $G$ has a vertex whose neighborhood is a union of two
(possibly empty) cliques, implying that $\chi(G) \le 2\omega(G)-1$. Despite
all these attempts to understand the structure of even-hole-free graphs, the
complexity of the stable set and coloring problems remains open for this
class.
 
For several subclasses of even-hole-free graphs these problems are solved in
polynomial time. Of particular interest is the class of (diamond, even
hole)-free graphs. The class was first studied in \cite{kmv} where it was
shown that (diamond, even hole)-free graphs can be decomposed by bisimplicial
cutsets (a special type of a star cutset that consists of two, possibly empty,
cliques) and 2-joins. One of the consequences of this decomposition theorem is
the existence of a vertex that is either of degree 2 or is simplicial (i.e.,
its neighborhood is a clique), implying that the class is $\beta$-perfect, and
for every graph $G$ in the class $\chi(G) \le \omega(G)+1$. The
$\beta$-perfection implies that the class can be colored in polynomial time by
coloring greedily on a particular, easily constructible, ordering of vertices.
The complexity of the stable set problem remains open for this class.

One of the motivations for the study of even-hole-free graphs is their
connection to $\beta$-perfect graphs introduced by Markossian, Gasparian and
Reed \cite{mgr}. For a graph $G$, let $\delta(G)$ denote the minimum degree
of a vertex of $G$. Consider the following total order on $V(G)$: order the
vertices by repeatedly removing a vertex of minimum degree in the subgraph of
vertices not yet chosen and placing it after all the remaining vertices but
before all the vertices already removed. Coloring greedily on this order
gives the upper bound: $\chi(G) \le \beta(G)$, where $\beta(G) =
\max\{\delta(H)+1 : H \text{~is an induced subgraph of~} G\}$. A graph is
\emph{$\beta$-perfect} if for each induced subgraph $H$ of $G$,
$\chi(H)=\beta(H)$. It is easy to see that $\beta$-perfect graphs are a
proper subclass of even-hole-free graphs.

\emph{Tree-width} is a well-known graph invariant, introduced by Robertson and
Seymour in~\cite{RS-GM02}. Many problems that are NP-hard in general become
tractable on graph classes of bounded tree-width~\cite{Courcelle90}.
Similarly, \emph{clique-width}, introduced by Courcelle, Engelfriet and
Rozenberg in~\cite{CER-93}, allows for many hard problems to become tractable
on graph classes of bounded clique-width~\cite{CMR00}. This includes finding
the largest clique or independent set, and deciding if a colouring with at
most $k$ colors exists (for fixed $k\in \mathbb N$). While bounded tree-width
implies bounded clique-width, the converse is not true in general. Graph
classes of bounded tree-width are necessarily sparse. In contrast, there
exist dense graph classes with bounded clique-width. This makes clique-width
particularly interesting in the study of algorithmic properties of hereditary
graph classes. The notion of \emph{rank-width} was defined by Oum and Seymour
in~\cite{OS-rw}, where they use it for an approximation algorithm for
clique-width. They also show that rank-width and clique-width are equivalent,
in the sense that a graph class has bounded rank-width if, and only if, it has
bounded clique-width. Meanwhile, the structure of graphs of bounded
rank-width is studied widely, and it turns out that rank-width is an elegant
notion, that also provides a better understanding of graph classes of bounded
clique-width.

Rank-width of subclasses of even-hole-free graphs has also been studied. In
\cite{dsl} it is shown that planar even-hole-free graphs have tree-width at
most 49. In \cite{kl} it is shown that even-hole-free graphs with no star
cutset have bounded rank-width. Even-hole-free graphs in general do not have
bounded tree-, clique-, rank-width, as they contain all chordal graphs.
Algorithms for chordal graphs follow from their decomposition by clique
cutsets, and clique cutsets in general agree well with a number of problems,
including stable set and coloring. An example of even-hole-free graphs with no
clique cutset and unbounded rank-width is given in \cite{kl}, which is a
slight modification of the class of permutation graphs introduced in
\cite{gr}. In \cite{tk} Kloks claims a proof of the fact that (diamond, even
hole)-free graphs can be decomposed by clique cutsets into graphs of bounded
clique-width. In this paper we exhibit a class of (diamond, even hole)-free
graphs with no clique cutset that has unbounded rank-width (and hence
clique-width), so disproving Kloks' claim.

Another interesting subclass of even-hole-free graphs is the class of (cap,
even hole)-free graphs, where a \emph{cap} is a graph made of a hole and a
vertex that has exactly two neighbors on this hole, which are furthermore
adjacent. Cap-free graphs in general are decomposed by amalgams in
\cite{cckv-cap}. Recently, Conforti, Gerards and Pashkovich \cite{cgp}, show
how to obtain a polynomial-time algorithm for solving the maximum weighted
stable set problem on any class of graphs that is decomposable by amalgams
into basic graphs for which one can solve the maximum weighted stable set
problem in polynomial time. This leads to a polynomial-time algorithm for
solving the maximum weight stable set problem for (cap, even-hole)-free
graphs. Subsequently, Cameron, da Silva, Huang and Vu\v{s}kovi\'c \cite{cshv}
give an explicit construction of (cap, even hole)-free graphs, which is then
used to show that (triangle, even hole)-free graphs have tree-width at most 5,
and that (cap, even hole)-free graphs with no clique cutset have clique-width
at most 48 (and hence bounded rank-width). This implies that a number of
problems can be solved efficiently on this class, and in particular the class
can be colored in polynomial time.

\section{Preliminaries}
%        =============

Graphs are finite, simple and undirected unless stated otherwise. The vertex
set of a graph $G$ is denoted by $V(G)$ and the edge set by $E(G)$. A graph
$H$ is a \emph{subgraph} of a graph $G$, denoted by $H\subseteq G$, if
$V(H)\subseteq V(G)$ and $E(H)\subseteq E(G)$. For a graph $G$ and a subset
$X\subseteq V(G)$, we let $G[X]$ denote the subgraph of $G$ \emph{induced} by
$X$, i.e.\ $G[X]$ has vertex set $X$, and $E(G[X])$ consists of the edges of
$G$ that have both ends in $X$. A graph $H\subseteq G$ is an \emph{induced
  subgraph }of $G$, if $H=G[X]$ for some $X\subseteq V(G)$. Moreover, we let
$G\setminus X:= G[V(G)\setminus X]$. The set $X$ is a \emph{clique}, if $G[X]$
contains all possible edges. If $G$ is connected, $X$ is called a
\emph{clique cutset} if $X$ is a clique and $G\setminus X$ is disconnected.
%For a vertex $x\in V(G)$, we let $N_G(x)$ denote the set of neighbours of $x$ in $G$, together with $x$.

A \emph{tree} is a connected, acyclic graph. A \emph{leaf} of a tree is a node
incident to exactly one edge. For a tree $T$, we let $L(T)$ denote the set of
all leaves of $T$. A tree node that is not a leaf is called \emph{internal}.
A tree is \emph{cubic}, if it has at least two vertices and every internal
node has degree $3$. A \emph{path} is a tree where every node has degree at
most $2$. The (at most $2$) leaves of a path $P$ are called
\emph{end-vertices} of $P$. A \emph{$u,v$-path} is a path with end-vertices
$u$ and $v$. A graph $P$ is a \emph{subpath} of a graph $G$, if $P$ is a path
and $P\subseteq G$.

For a set $X$, let $2^X$ denote the set of all subsets of $X$. For sets $R$
and $C$, an \emph{$(R,C)$-matrix} is a matrix where the rows are indexed by
elements in $R$ and columns indexed by elements in $C$. For an $(R,C)$-matrix
$M$, if $X\subseteq R$ and $Y\subseteq C$, we let $M[X,Y]$ be the submatrix of
$M$ where the rows and the columns are indexed by $X$ and $Y$, respectively.
For a graph $G=(V,E)$, let $A_G$ denote the adjacency matrix of $G$ over the
binary field (i.e.\ $A_G$ is the $(V,V)$-matrix, where an entry is $1$, if and
only if, the column-vertex is adjacent to the row-vertex, and $0$ otherwise).
The \emph{cutrank function} of $G$ is the function
$\cutrk{G}\colon2^{V}\to\IN$, given by
\[\cutrk{G}(X)=\rank\left(A_G[X,V\sm X]\right),\]
where the rank is taken over the binary field.

A \emph{rank decomposition} of a graph $G$ is a pair $(T,\lambda)$, where $T$
is a cubic tree and $\lambda\colon V(G)\to L(T)$ is a bijection. If
$\left|V(G)\right| \le 1$, then $G$ has no rank decomposition. For every edge
$e\in E(T)$, the connected components of $T-e$ induce a partition $(A_e,B_e)$
of $L(T)$. The \emph{width} of an edge $e$ is defined as
$\cutrk{G}(\lambda^{-1}(A_e))$. The \emph{width} of $(T,\lambda)$, denoted by
$\width(T,\lambda)$, is the maximum width over all edges of $T$. The
\emph{rank-width} of $G$, denoted by $\rw(G)$, is the minimum integer $k$,
such that there is a rank decomposition of $G$ of width $k$. (If
$\left|V(G)\right| \le 1$, we let $\rw(G)=0$.)

\begin{remark}\label{rem:isubgraph-monotone}
  Let $G$ be a graph and $H\subseteq G$ be an induced subgraph of $G$.
  Then $\rw(H) \le \rw(G)$. %\qed
\end{remark} 

We say that a class $\cC$ of graphs has \emph{bounded} rank-width, if
there exists a constant $k\in \IN$, such that every $G\in \cC$
satisfies $\rw(G) \le k$. If such a constant does not exist, $\cC$ has
\emph{unbounded} rank-width.

We conclude the section with two lemmas that we will use in
Section~\ref{sec:lower-bound}.

\begin{lemma}\label{lem:rdec-path}
  Let $k\in \IN$. Let $G$ be a graph, $P\subseteq G$ an induced path,
  $(T,\lambda)$ a rank decomposition of $G$ of width at most $k$, and $e \in
  E(T)$. Let $(X,Y)$ be the bipartition of $V(P)$ induced by the two
  components of $T-e$. Then the induced graph $P[X]$ has at most $k+1$
  connected components.
\end{lemma}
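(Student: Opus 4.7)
The plan is to work over the binary field and exploit the low-rank structure of the cut matrix directly. Let $V_1=\lambda^{-1}(A_e)$ and $V_2=\lambda^{-1}(B_e)$, so that $X=V(P)\cap V_1$ and $Y=V(P)\cap V_2$. The matrix $M=A_G[X,Y]$ is a submatrix of $A_G[V_1,V_2]$, hence $\rank(M)\le\cutrk{G}(V_1)\le k$. Let $c$ be the number of connected components of $P[X]$. The inequality $c\le k+1$ is trivial for $c\le 1$, so assume $c\ge 2$. For each component $C_i$ of $P[X]$, define $s_i\in\{0,1\}^{Y}$ to be the mod-$2$ sum of the rows of $M$ indexed by the vertices of $C_i$. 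Each $s_i$ lies in the row space of $M$, which has dimension at most $k$, so it suffices to exhibit $c-1$ linearly independent vectors among $s_1,\ldots,s_c$.

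Order the components $C_1,\ldots,C_c$ as they appear along $P$ and let $w_j$ denote the last vertex of $C_j$ on $P$. Since $C_j$ and $C_{j+1}$ are separated on $P$ by at least one $Y$-vertex, the $P$-successor of $w_j$ lies in $Y$; call it $y_j$. The vertices $y_1,\ldots,y_{c-1}$ are pairwise distinct. The crucial observation is that, because $P$ is induced and $X\subseteq V(P)$, every $X$-neighbour of $y_j$ in $G$ must itself be a $P$-neighbour of $y_j$. There are at most two such $P$-neighbours: $w_j\in C_j$ on one side, and the vertex of $P$ immediately after $y_j$ on the other, which lies in $X$ only if it coincides with the first vertex of $C_{j+1}$. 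Consequently $s_j(y_j)=1$, $s_{j+1}(y_j)\in\{0,1\}$, and $s_i(y_j)=0$ for all $i\notin\{j,j+1\}$.

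The resulting $c\times(c-1)$ matrix $S$ with entries $S[i,j]=s_i(y_j)$ therefore has ones on its main diagonal, $\{0,1\}$-entries on its first subdiagonal, and zeros everywhere else. Its $c-1$ columns are linearly independent over the binary field: given a nonempty $J\subseteq\{1,\ldots,c-1\}$, set $j^{*}=\min J$; in row $j^{*}$ of the sum of the columns indexed by $J$, columns with index greater than $j^{*}$ vanish and columns with smaller index are absent from $J$, so the only contribution comes from column $j^{*}$ itself, namely the value $1$, forbidding the sum from being zero. Hence $\rank(S)=c-1$, the restrictions $s_i|_{\{y_1,\ldots,y_{c-1}\}}$ span a $(c-1)$-dimensional subspace, and so do the full vectors $s_i$. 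Combined with the bound that the row space of $M$ has dimension at most $k$, this yields $c-1\le k$. The delicate step, and the most likely pitfall, is the second paragraph: the structural constraint on the $X$-neighbours of $y_j$ relies on both the inducedness of $P$ and the inclusion $X\subseteq V(P)$, and failing either hypothesis would destroy the near-triangular structure of $S$.
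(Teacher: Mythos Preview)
Your proof is correct and follows essentially the same strategy as the paper: order the components along $P$, use the inducedness of $P$ to identify $G$-adjacency with $P$-adjacency across the cut, and extract from the boundary between consecutive components a (near-)triangular submatrix of the cut matrix certifying rank at least $c-1$. The only difference is cosmetic---where the paper picks one boundary vertex per non-initial component and argues that the corresponding rows of $A_P[X',Y']$ are independent, you sum all rows within each component and obtain the same count via the bidiagonal matrix $S$ in the columns $y_1,\ldots,y_{c-1}$.
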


\begin{proof} 
  Towards a contradiction, assume that $P[X]$ has at least $k+2$ components.
  Order the components (which are subpaths of $P$) according to their
  appearance along $P$. From each component, except for the first one, pick
  the first vertex. In this way we obtain a set $X'\subseteq X$ of at least
  $k+1$ vertices, each with one or two neighbours in $Y$ (two neighbours only
  if the component is a singleton vertex). Let $Y'$ be the set of vertices in
  $Y$ that are adjacent to a vertex in $X'$. Then each row of $A_P[X',Y']$ has
  one or two non-zero entries, and no two rows are equal. Ordering the
  vertices of $X'$ and $Y'$ according to their appearance on $P$ yields a
  matrix with blocks corresponding to subpaths of $P$, such that in each row
  the (at most two) non-zero entries appear consecutively. By the choice of
  $X'$, within each block there is at most one row with precisely one non-zero
  entry, while all other rows in that block have two non-zero entries. 
% consecuteve ones
  With this it is easy to see that the rows of each block are linearly
  independent, and it follows that $A_P[X',Y']$ has rank at least $k+1$. Since
  $P$ is induced, we have $A_P[X',Y']=A_G[X',Y']$, and hence the width of $e$
  is at least $k+1$, a contradiction to the width of $(T,\lambda)$ being at
  most $k$.
\end{proof}

We use the following definition, several variants of which exist in the
literature.

\begin{defeng}
  Let $T$ be a tree. We call an edge $e \in E(T)$ \emph{balanced}, if the
  partition $(A_e,B_e)$ of $L(T)$ satisfies $\frac13\left|L(T)\right| \le
  \left|A_e\right|$ and $\frac13\left|L(T)\right| \le \left|B_e\right|$.
\end{defeng}

The following lemma is well-known and we omit the proof.

\begin{lemma}\label{lem:balanced-edge}
  Every cubic tree has a balanced edge.
\end{lemma}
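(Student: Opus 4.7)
The plan is to orient each edge of the cubic tree $T$ from the side with fewer leaves to the side with more leaves, and then locate a vertex toward which all incident edges point. First, if some edge $e$ satisfies $|A_e|=|B_e|$, then $|A_e|=|B_e|=|L(T)|/2 \ge |L(T)|/3$, so $e$ itself is balanced and we are done. Otherwise every edge has a strictly heavier side and the orientation is well defined.

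Next, I would exhibit a vertex $w$ with no outgoing edges. This can be arranged by starting at any vertex and repeatedly following outgoing edges: since at each step the number of leaves on the ``target'' side strictly increases, the walk cannot revisit a vertex and must therefore terminate at a sink $w$.

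A short case distinction on $w$ finishes the argument. If $w$ were a leaf, the unique edge incident to $w$ would have $\{w\}$ as its heavier side, forcing $1 > |L(T)|-1$; this is impossible, since a cubic tree has at least two leaves. Hence $w$ is internal, and therefore of degree $3$. Let $x_1,x_2,x_3$ be its neighbours and $n_1,n_2,n_3$ the numbers of leaves in the corresponding components of $T-w$, so that $n_1+n_2+n_3=|L(T)|$. Because each edge $wx_i$ points toward $w$, we have $n_i<|L(T)|/2$ for every $i$.

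The key observation is that the edge $e^*=wx_j$ leading to the heaviest subtree (say $n_j=\max\{n_1,n_2,n_3\}$) is balanced: on one hand $n_j \ge |L(T)|/3$ by the averaging inequality $n_j \ge (n_1+n_2+n_3)/3$, and on the other hand $|L(T)|-n_j > |L(T)|/2 \ge |L(T)|/3$ by the bound on $n_j$. The only mildly subtle point is the degenerate case $|V(T)|=2$, where $T$ is a single edge whose $(1,1)$ partition of $L(T)$ already satisfies $1\ge 2/3$; beyond that, the argument is a standard ``heavy edge'' centroid-style argument and I do not anticipate a real obstacle.
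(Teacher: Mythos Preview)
Your argument is correct and is the standard centroid-style proof of this folklore lemma. One cosmetic slip: along the walk the number of leaves on the \emph{target} side strictly \emph{decreases}, not increases (at an internal vertex of degree~$3$, the new target subtree omits the third branch, which carries at least one leaf); either direction of monotonicity yields termination, or one can simply note that a walk in a tree without immediate backtracking is a simple path and hence finite. The paper itself does not give a proof---it calls the lemma well-known and omits the argument---so there is nothing to compare against; your write-up would serve perfectly well in its place.
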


%\begin{proof} 
%Let $T$ be a cubic tree.
%For every edge $e \in E(T)$ and every subtree $T'$ obtained by removing $e$
%from $T$, we orient the edge $e$ in $T$ away from $T'$,
%if $\frac13\left|L(T)\right|> \left|L(T')\right|$. 
%If then there is a non-oriented edge $e$, 
%we are done. Assume now that all edges are oriented. Since $T$ is a tree,
%some node $s \in V(T)$ must be a sink. Note that $s$ cannot be a leaf. 
%But $T$ is cubic, and each of the three subtrees obtained from $T$ by
%deleting $s$ contains less than $\frac13$ of the vertices of $L(T)$, 
%a contradiction.
%\end{proof}

\begin{lemma}\label{lem:heavy-subpath}
  For $m,k\in \IN$ with $k>1$, let $G$ be a graph, $P\subseteq G$ be an
  induced path and $|V(G)|-|V(P)|=m$. Let $(T,\lambda)$ be a rank
  decomposition of $G$ of width at most $k$, and let $e\in E(T)$ be a balanced
  edge. Let $(X,Y)$ be the bipartition of $V(P)$ induced by $e$. Then each of
  the two induced subgraphs $P[X]$ and $P[Y]$ contains a connected component
  with at least $\left\lfloor\frac{|V(G)|-3m}{3(k+1)}\right\rfloor$ vertices.
\end{lemma}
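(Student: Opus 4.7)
The plan is to combine the balanced-edge hypothesis with Lemma~\ref{lem:rdec-path} via a pigeonhole argument. Let $A=\lambda^{-1}(A_e)$ and $B=\lambda^{-1}(B_e)$, so that $(A,B)$ is the bipartition of $V(G)$ corresponding to the two components of $T-e$. Since $e$ is balanced and $\lambda$ is a bijection, both $|A|$ and $|B|$ are at least $\tfrac{1}{3}|V(G)|$.

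The key observation is that $X=A\cap V(P)$ and $Y=B\cap V(P)$, and since only $m$ vertices of $G$ lie outside $V(P)$, at most $m$ vertices of $A$ can fail to lie in $V(P)$. Hence
\[
|X|\;\ge\;\frac{|V(G)|}{3}-m\;=\;\frac{|V(G)|-3m}{3},
\]
and by the same argument $|Y|\ge (|V(G)|-3m)/3$. If this lower bound is nonpositive the claim is vacuous, so we may assume $|V(G)|>3m$.

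Now I would invoke Lemma~\ref{lem:rdec-path} on the induced path $P$, the rank decomposition $(T,\lambda)$, and the edge $e$: the number of connected components of $P[X]$ is at most $k+1$, and likewise for $P[Y]$. By pigeonhole, some component of $P[X]$ has size at least
\[
\frac{|X|}{k+1}\;\ge\;\frac{|V(G)|-3m}{3(k+1)},
\]
and since component sizes are integers, at least $\bigl\lfloor (|V(G)|-3m)/(3(k+1))\bigr\rfloor$. The same reasoning applied to $Y$ finishes the proof.

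I do not anticipate a real obstacle here: the argument is essentially bookkeeping. The only point that deserves care is the passage from ``$|A|\ge |V(G)|/3$'' on the full vertex set to a lower bound on $|X|=|A\cap V(P)|$; this is where the $-3m$ correction in the statement appears, and accounting for the $m$ vertices outside $V(P)$ must be done on both $A$ and $B$ so that the conclusion applies symmetrically to $P[X]$ and $P[Y]$. After that, Lemma~\ref{lem:rdec-path} does all the structural work and pigeonhole does the rest.
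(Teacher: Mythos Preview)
Your proof is correct and follows the same approach as the paper's: bound $|X|$ and $|Y|$ below using balancedness of $e$ and the fact that at most $m$ vertices lie outside $V(P)$, then apply Lemma~\ref{lem:rdec-path} and pigeonhole. The paper's version is terser but the logic is identical.
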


\begin{proof} 
  Since $e$ is balanced, we have
  $\left|X\right|\geq\frac13\left|V(G)\right|-m$ and
  $\left|Y\right|\geq\frac13\left|V(G)\right|-m$. 
  By Lemma~\ref{lem:rdec-path}, both $P[X]$ and $P[Y]$ have at most $k+1$
  connected components, which proves the lemma.
\end{proof}

\section{Construction}
%        ============

In this section we construct a class of (diamond, even-hole)-free graphs
$(G_d)_{d \ge 1}$.

For $1 \le k \le d$, let
\[S_k=\{(a_1,a_2,\dots,a_{k-1},a_k)\,:\,a_1,a_2,\dots,a_{k-1}\in\{1,3\},\,a_k\in\{1,2,3,4\}\},\]
and $S^d=\bigcup_{k=1}^d S_k$. If $u\in S_k$, then we denote $l(u)=k$, and say
that the \emph{length of} $u$ is $k$.

In $S^d$, let $\preq$ denote the lexicographical order defined as follows.
%, which is defined in the following way: 
For $a=(a_1,a_2,\dots,a_k)\in S^d$ and
$b=(b_1,b_2,\dots,b_l)\in S^d$, $a\preq b$ if and only if $k \le l$ and
$a_i=b_i$ for $1\leq i \le k$, or $t=\min\{i\,:\,a_i \ne b_i\}$ is
well-defined and $a_t<b_t$. This order is a total order on the finite set
$S^d$, so we introduce the following notation:
\begin{itemize}
\item for $a\in S^d\sm\{(4)\}$, $s(a)$ is the smallest element (w.r.t.\ 
  $\preq$) of $S^d$ that is greater than $a$;
\item for $a\in S^d\sm\{(1)\}$, $p(a)$ is the greatest element (w.r.t.\ 
  $\preq$) of $S^d$ that is smaller than $a$.
\end{itemize}

Let $P'_d$ denote the path on vertex set $S^d$ connecting the vertices
according to the lexicographic order, and let $P_d$ be the path obtained from
$P'_d$ by subdividing every edge $uv\in E(P'_d)$ twice if $l(u) = l(v)$, and
once, otherwise. Finally, let $W_d=\{v_1,v_2,\dots,v_d\}$ be a set of (new)
vertices, such that $v_k$, for $1 \le k \le d$, is adjacent to all vertices of
$S_k$ and all other vertices of $W_d$. Then, $G_d$ is the graph induced by the
set $W_d\cup V(P_d)$. For vertices of $W_d$ we say that they are
\emph{centers} of $G_d$. Figure~\ref{fig:G4} shows $G_4$.

\begin{remark} \label{rem:size}
  For $d \ge 1$, the following hold:
  \begin{enumerate}[(i)]
  \item \label{s1} $|S^d|=\sum_{k=1}^d 4\cdot 2^{k-1}=4(2^d-1) \ge 2^{d+1}$,
    and
  \item\label{s2} $3|S^d|+d \ge |V(G_d)| \ge 2|S^d| \ge 2^{d+2}$. 
  \end{enumerate}
\end{remark}

\begin{proof}
  Part (\ref{s1}) follows from the fact that for $k=1$, the set $S_k$ contains
  $4$ vertices, and that the number of vertices in the set doubles whenever
  $k$ increases by one. Part (\ref{s2}) follows from Part (\ref{s1}) and the
  number of subdivision vertices added in the construction of $P_d$.
\end{proof}

\begin{remark} \label{rem:continuous}
  For $d \ge 1$, every $u\in S^d$, with $u \ne (4)$, satisfies
  $|l(u)-l(s(u))| \le 1$.
\end{remark}

\begin{figure}[htbp]
  \begin{center}  
    \newcommand{\NtoS}[1]{\ifcase #1 % number -> string
          (0)                      \or 0a    \or 0b    %   0,   1,   2,
      \or (1)                      \or 1a              %   3,   4,     
      \or    (1,1)                 \or 11a             %   5,   6,     
      \or         (1,1,1)          \or 111a            %   7,   8,     
      \or                (1,1,1,1) \or 1111a \or 1111b %   9,  10,  11,
      \or                (1,1,1,2) \or 1112a \or 1112b %  12,  13,  14,
      \or                (1,1,1,3) \or 1113a \or 1113b %  15,  16,  17,
      \or                (1,1,1,4) \or 1114a           %  18,  19,     
      \or         (1,1,2)          \or 112a  \or 112b  %  20,  21,  22,
      \or         (1,1,3)          \or 113a            %  23,  24,     
      \or                (1,1,3,1) \or 1131a \or 1131b %  25,  26,  27,
      \or                (1,1,3,2) \or 1132a \or 1132b %  28,  29,  30,
      \or                (1,1,3,3) \or 1133a \or 1133b %  31,  32,  33,
      \or                (1,1,3,4) \or 1134a           %  34,  35,     
      \or         (1,1,4)          \or 114a            %  36,  37,     
      \or    (1,2)                 \or 12a   \or 12b   %  38,  39,  40,
      \or    (1,3)                 \or 13a             %  41,  42,     
      \or         (1,3,1)          \or 131a            %  43,  44,     
      \or                (1,3,1,1) \or 1311a \or 1311b %  45,  46,  47,
      \or                (1,3,1,2) \or 1312a \or 1312b %  48,  49,  50,
      \or                (1,3,1,3) \or 1313a \or 1313b %  51,  52,  53,
      \or                (1,3,1,4) \or 1314a           %  54,  55,     
      \or         (1,3,2)          \or 132a  \or 132b  %  56,  57,  58,
      \or         (1,3,3)          \or 133a            %  59,  60,     
      \or                (1,3,3,1) \or 1331a \or 1331b %  61,  62,  63,
      \or                (1,3,3,2) \or 1332a \or 1332b %  64,  65,  66,
      \or                (1,3,3,3) \or 1333a \or 1333b %  67,  68,  69,
      \or                (1,3,3,4) \or 1334a           %  70,  71,     
      \or         (1,3,4)          \or 134a            %  72,  73,     
      \or    (1,4)                 \or 14a             %  74,  75,     
      \or (2)                      \or 2a    \or 2b    %  76,  77,  78,
      \or (3)                      \or 3a              %  79,  80,     
      \or    (3,1)                 \or 31a             %  81,  82,     
      \or         (3,1,1)          \or 311a            %  83,  84,     
      \or                (3,1,1,1) \or 3111a \or 3111b %  85,  86,  87,
      \or                (3,1,1,2) \or 3112a \or 3112b %  88,  89,  90,
      \or                (3,1,1,3) \or 3113a \or 3113b %  91,  92,  93,
      \or                (3,1,1,4) \or 3114a           %  94,  95,     
      \or         (3,1,2)          \or 312a  \or 312b  %  96,  97,  98,
      \or         (3,1,3)          \or 313a            %  99, 100,     
      \or                (3,1,3,1) \or 3131a \or 3131b % 101, 102, 103,
      \or                (3,1,3,2) \or 3132a \or 3132b % 104, 105, 106,
      \or                (3,1,3,3) \or 3133a \or 3133b % 107, 108, 109,
      \or                (3,1,3,4) \or 3134a           % 110, 111,     
      \or         (3,1,4)          \or 314a            % 112, 113,     
      \or    (3,2)                 \or 32a   \or 32b   % 114, 115, 116,
      \or    (3,3)                 \or 33a             % 117, 118,     
      \or         (3,3,1)          \or 331a            % 119, 120,     
      \or                (3,3,1,1) \or 3311a \or 3311b % 121, 122, 123,
      \or                (3,3,1,2) \or 3312a \or 3312b % 124, 125, 126,
      \or                (3,3,1,3) \or 3313a \or 3313b % 127, 128, 129,
      \or                (3,3,1,4) \or 3314a           % 130, 131,     
      \or         (3,3,2)          \or 332a  \or 332b  % 132, 133, 134,
      \or         (3,3,3)          \or 333a            % 135, 136,     
      \or                (3,3,3,1) \or 3331a \or 3331b % 137, 138, 139,
      \or                (3,3,3,2) \or 3332a \or 3332b % 140, 141, 142,
      \or                (3,3,3,3) \or 3333a \or 3333b % 143, 144, 145,
      \or                (3,3,3,4) \or 3334a           % 146, 147,     
      \or        (3,3,4)           \or 334a            % 148, 149,     
      \or   (3,4)                  \or 34a             % 150, 151,     
      \or (4)                      \or 4a    \or 4b    % 152, 153, 154,
      \or (5)                                          % 155
      \fi}

    \newcommand{\layer}[1]{\ifcase #1 0,155%                    layer 0
      \or 3,76,79,152%                                          layer 1
      \or 5,38,41,74,81,114,117,150%                            layer 2
      \or 7,20,23,36,43,56,59,72,83,96,99,112,119,132,135,148%  layer 3
      \or 9,12,15,18,25,28,31,34,45,48,51,54,61,64,67,70,85,88,91,94,%
          101,104,107,110,121,124,127,130,137,140,143,146%      layer 4
      \or 4,6,8,10,11,13,14,16,17,19,21,22,24,26,27,29,30,32,33,35,37,39,%
          40,42,44,46,47,49,50,52,53,55,57,58,60,62,63,65,66,68,69,71,73,75,%
          77,78,80,82,84,86,87,89,90,92,93,95,97,98,100,102,103,105,106,108,%
          109,111,113,115,116,118,120,122,123,125,126,128,129,131,133,134,%
          136,138,139,141,142,144,145,147,149,151%              layer 5
      \fi}

    \newcommand{\col}[1]{\ifcase #1 %
      black\or blue\or red\or green!70!black\or yellow!60!red\else none\fi}

    \tikzset{v/.style={circle, draw=black, minimum size=1.5mm, inner sep=0pt},
             n/.style={draw=none}
            }

    \begin{tikzpicture}
      \node[v, fill=\col{1}, label=180:$v_1$] (x1) at (195:2.0) {};
      \node[v, fill=\col{2}, label=135:$v_2$] (x2) at (135:2.5) {};
      \node[v, fill=\col{3}, label=105:$v_3$] (x3) at ( 65:2.0) {};
      \node[v, fill=\col{4}, label=  0:$v_4$] (x4) at (320:2.0) {};
      \edef\nlist{\layer{5}}
      \foreach \num in \nlist{
        \pgfmathsetmacro{\angle}{270-360/155*\num}
        \node[v] (\num) at (\angle : 6.0) {};
      }
      \foreach \ind in {4,3,2,1}{ 
        \edef\nlist{\layer{\ind}}
        \pgfmathsetmacro{\dista}{6.3+\ind/7}
        \foreach \num in \nlist{
          \pgfmathsetmacro{\angle}{270-360/155*\num}
          \node[v, fill=\col{\ind}] (\num) at (\angle:6.0) {};
          \draw[\col{\ind}] (x\ind)--(\num);
          \node[n,rotate=\angle] () at (\angle:\dista) {\NtoS{\num}};
        }
      }
      \foreach[count=\mum from 3] \num in {4,5,...,152} \draw (\num)--(\mum);
      \draw (x1)--(x2)--(x3)--(x4)--(x1)--(x3)  (x2)--(x4);
      \node[v, fill=\col{4}, label=  0:$v_4$] (x4) at (320:2.0) {};
    \end{tikzpicture}
  \end{center}
  \caption{The graph $G_4$}
  \label{fig:G4}
\end{figure}

Let us introduce some additional notation for the elements of $S^d$. For
$a,b\in S^d$, \emph{interval} $[a,b]$ is the set $\{c\in S^d\,:\, a\preq
c\preq b\}$. We say that an interval $[a,b]$ is \emph{proper} if for all
$c\in [a,b]\sm\{a,b\}$, $l(c)\not\in\{l(a),l(b)\}$. Note that
$[a,b]=\bigcup_{a\preq c\prec b}[c,s(c)]$. For an interval $[c,s(c)]$, $a\preq
c\prec b$, we say that it is a \emph{step of $[a,b]$}, and if additionally
$l(c)=l(s(c))$, we say that this step is \emph{flat}.

\begin{lemma}\label{OddEqual}
  Let $a,b\in S^d$. If $[a,b]$ is a proper interval such that $l(a)=l(b)$,
  then it contains an odd number of flat steps.
\end{lemma}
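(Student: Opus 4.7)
The plan is strong induction on $|[a, b]|$, using the observation that elements of $S^d$ naturally form a rooted forest in which $(a_1, \dots, a_k)$ has children $(a_1, \dots, a_k, i)$ for $i \in \{1,2,3,4\}$ precisely when $a_k \in \{1, 3\}$ and $k < d$; all other nodes are leaves. The order $\preq$ concatenates the pre-order DFS traversals of the trees rooted at $(1), (2), (3), (4)$, so $s(u)$ is the DFS successor of $u$ in this forest, and Remark~\ref{rem:continuous} (levels change by at most one per step) will pin down the subcase we fall into.

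For the base case $|[a, b]| = 2$ we have $b = s(a)$, and the single step is flat since $l(a) = l(b)$, giving one (odd) flat step. For the inductive step, set $k := l(a)$. All intermediate elements have level different from $k$, and by Remark~\ref{rem:continuous} their levels are either all $> k$ or all $< k$. In the first subcase $s(a)$ has level $k + 1$, forcing $a_k \in \{1, 3\}$ and $k < d$; then $b$ must equal $(a_1, \dots, a_{k-1}, a_k + 1)$. Writing $c_i := (a_1, \dots, a_k, i)$ for the children of $a$, the traversal of $[a,b]$ reads $a \to c_1 \to (\text{subtree of } c_1) \to c_2 \to c_3 \to (\text{subtree of } c_3) \to c_4 \to b$. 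The subintervals $[c_1, c_2]$ and $[c_3, c_4]$ are proper intervals at level $k+1$, strictly smaller than $[a, b]$, so by the inductive hypothesis each contains an odd number of flat steps. Among the connector steps $a \to c_1$, $c_2 \to c_3$, $c_4 \to b$, only the middle one is flat, so the total flat count equals a sum of two odd numbers plus one, which is odd.

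In the second subcase all intermediates lie at level $< k$, forcing $a_k = 4$. Writing $a = (a_1, \dots, a_{k-j-2}, 1, 3, \dots, 3, 4)$ with exactly $j \ge 0$ trailing $3$'s before the final $4$ (and $a_{k-j-1} = 1$ needed for $b$ to exist), a direct DFS trace yields a three-phase path: $j + 1$ non-flat upward steps through ``$4$-ending'' ancestors to reach $(a_1, \dots, a_{k-j-2}, 2)$; one flat step to $(a_1, \dots, a_{k-j-2}, 3)$; and $j + 1$ non-flat downward steps appending $1$'s to arrive at $b = (a_1, \dots, a_{k-j-2}, 3, 1, \dots, 1)$. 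Exactly one flat step occurs. The main obstacle will be verifying in the first subcase that $[c_1, c_2]$ and $[c_3, c_4]$ really are proper intervals; this reduces to observing that any intermediate element must be a proper descendant of $c_1$ (respectively $c_3$) in the forest, and hence sits at level $> k + 1$.
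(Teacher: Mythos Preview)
Your argument is correct and follows essentially the same route as the paper: induction on $|[a,b]|$, with the case split governed by the last coordinate $a_k$ (equivalently, by whether the intermediate levels lie above or below $k$), decomposing via the children $c_i=(a_1,\dots,a_k,i)$ in the ``up'' case and tracing through the $(\dots,2)\to(\dots,3)$ pivot in the ``down'' case. The only cosmetic differences are that the paper groups the ``up'' case as three proper subintervals $[c_1,c_2],[c_2,c_3],[c_3,c_4]$ (each odd) rather than your two-plus-one-flat-connector, and in the ``down'' case the paper inducts on $[s(a),p(b)]$ instead of unrolling the staircase explicitly as you do.
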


\begin{proof}
  Our proof is by induction on the number of elements of $[a,b]$. If $[a,b]$
  has only 2 elements, that is if $b=s(a)$, then the lemma trivially holds.
  Let $a=(a_1,a_2,\dots,a_k)$.

  \begin{enumerate}[{Case }1.]
% \medskip\noindent\emph{Case 1.} $a_k=2$.
  \item  $a_k=2$.
% \noindent

    In this case $b=(a_1,a_2,\dots,a_{k-1},3)$ and $[a,b]=\{a,b\}$, so the
    conclusion trivially follows.

% \medskip\noindent\emph{Case 2.} $a_k\in\{1,3\}$.
  \item $a_k\in\{1,3\}$.
% \noindent

    In this case $b=(a_1,a_2,\dots,a_{k-1},a_k+1)$. If $k=d$, then
    $[a,b]=\{a,b\}$, and the conclusion follows. So, let $k<d$. Then
    \[[a,b]=[a,a^{(1)}]\cup [a^{(1)},a^{(2)}]\cup [a^{(2)},a^{(3)}]\cup
    [a^{(3)},a^{(4)}]\cup [a^{(4)},b],\] 
    where $a^{(i)}=(a_1,a_2,\dots,a_k,i)$, for $1 \le i \le 4$.   
    Since $s(a)=a^{(1)}$ and $s(a^{(4)})=b$, the number of flat steps of
    $[a,b]$ is the sum of the numbers of flat steps of $[a^{(i)},a^{(i+1)}]$,
    for $1 \le i \le 3$. Note that $a^{(i)}$ and $a^{(i+1)}$, for $1 \le i
    \le 3$, are consecutive $(k+1)$-tuples of $S^d$, i.e.\ the interval
    $[a^{(i)},a^{(i+1)}]$ is proper. Therefore, by induction, each of the
    intervals $[a^{(i)},a^{(i+1)}]$, for $1 \le i \le 3$, has an odd number of
    flat steps, and hence so does the interval $[a,b]$.

% \medskip\noindent\emph{Case 3.} $a_k=4$.
  \item $a_k=4$.
% \noindent

    In this case $a_{k-1}\in\{1,3\}$, so 
    \[a=(a_1,\dots,a_{i-1},1,\underbrace{3,\dots,3}_{k-i-1},4),\] 
    where $1 \le i \le k-1$ ($a$ has at least one coordinate equal to 1, since
    there does not exist a $k$-tuple in $S^d$ which is larger than the $k$-tuple
    $({3,\dots,3},4)$).
  
    If $i=k-1$, then $s(a) = (a_1,\dots,a_{i-1},2)$,
    $s(s(a)) = (a_1,\dots,a_{i-1},3)$ and
    $s(s(s(a))) = (a_1,\dots,a_{i-1},3,1)=b$,
    and hence the interval $[a,b]$ has one flat step.

    So, let $i<k-1$. Then
    \begin{align*}
      s(a)&=(a_1,\dots,a_{i-1},1,\underbrace{3,\dots,3}_{k-i-2},4),\\
      p(b)&=(a_1,\dots,a_{i-1},3,\underbrace{1,\dots,1}_{k-i-1}),\\
      b&=(a_1,\dots,a_{i-1},3,\underbrace{1,\dots,1}_{k-i}).
    \end{align*}
    So, the number of flat steps of the interval $[a,b]$ is the same as the
    number of flat steps of the interval $[s(a),p(b)]$. Since $s(a)$ and $p(b)$
    are consecutive $(k-1)$-tuples of $S^d$, the interval $[s(a),p(b)]$ is
    proper, and the conclusion follows by induction. \qedhere
  \end{enumerate}
\end{proof}

\begin{lemma}\label{ZeroEqual}
  Let $a,b\in S^d$. If $[a,b]$ is a proper interval such that $l(a) \ne l(b)$,
  then it does not contain a flat step.
\end{lemma}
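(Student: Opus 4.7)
My plan is to proceed by induction on $|[a,b]|$, mirroring the structure of the proof of Lemma~\ref{OddEqual}, with a case analysis on the last coordinate $a_k$ of $a=(a_1,\dots,a_k)$. The base case $|[a,b]|=2$ is immediate: $b=s(a)$, and the unique step is flat precisely when $l(a)=l(b)$, which is ruled out by hypothesis.

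For the inductive step with $|[a,b]|\ge 3$, I would first dispose of the easy cases. If $a_k=2$, or if $a_k\in\{1,3\}$ and $k=d$, then $s(a)$ has length $k=l(a)$; since $s(a)\ne b$ (otherwise $l(b)=l(a)$), $s(a)$ is an intermediate element of length $l(a)$, contradicting properness, so these cases cannot arise. If $a_k\in\{1,3\}$ and $k<d$, then $s(a)=(a_1,\dots,a_k,1)$ has length $k+1$; properness pins down $b$ via two short arguments: the next sibling $(a_1,\dots,a_{k-1},a_k+1)$ would otherwise be intermediate of length $l(a)$, and the all-ones extension $(a_1,\dots,a_k,1,\dots,1)$ of length $m$ would otherwise be intermediate of length $l(b)$. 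This forces $b=(a_1,\dots,a_k,\underbrace{1,\dots,1}_{m-k})$, so $[a,b]$ is the strictly length-increasing sequence $a,(a,1),(a,1,1),\dots,b$, and no step is flat.

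I expect the main obstacle to be the case $a_k=4$, where $k\ge 2$, $a_{k-1}\in\{1,3\}$, and $s(a)$ has length $k-1$. Setting $r=\max\{i\in\{1,\dots,k-1\}:a_i=1\}$, with the convention $r=0$ if no such $i$ exists, one iteratively applies $s$ and checks that $l(s^j(a))=k-j$ for $j=0,\dots,k-1-r$, a pure downswing. For $r\ge 1$ the sequence then reaches $s^{k-r}(a)=(a_1,\dots,a_{r-1},2)$ of length $r$, is followed by a flat step to $s^{k-r+1}(a)=(a_1,\dots,a_{r-1},3)$ of the same length, and then an upswing that eventually reaches length $k$ again at position $2k-2r+1$.

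The technical heart is to show that if $[a,b]$ is proper with $l(a)\ne l(b)$, then $b=s^{i^*}(a)$ with $i^*\le k-r$ (and $i^*\le k-1$ when $r=0$): any larger $i^*$ yields either an intermediate element whose length matches $l(b)$ (one of the lengths on the downswing, or the length-$r$ elements at positions $k-r$ and $k-r+1$) or an intermediate element of length $l(a)=k$ appearing at position $2k-2r+1$, both contradicting properness. Once this claim is settled, $[a,b]$ consists entirely of steps that decrease the length by $1$, so no step is flat, completing the proof.
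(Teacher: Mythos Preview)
Your argument is correct and essentially the same as the paper's: both trace out the explicit down--flat--up structure of successors when $a_k=4$ (your $r$ is the paper's $i$) and conclude that properness forces $b$ to lie on the initial strictly decreasing segment. Two minor remarks: first, your inductive hypothesis is never actually invoked, so the proof is in fact direct (as is the paper's); second, the paper shortcuts your case $a_k\in\{1,3\}$, $k<d$ by invoking the symmetry of $S^d$ to assume $l(a)>l(b)$ from the outset, which immediately forces $a_k=4$, whereas you handle the $l(b)>l(a)$ situation explicitly via the all-ones extension.
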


\begin{proof}
  Note that the set $S^d$ is symmetric, so we may assume that $l(a)>l(b)$.  
  Let $a=(a_1,\dots,a_{k-1},a_k)$.  
  If $a_k<4$, then $a \prec (a_1,\dots,a_{k-1},a_k+1)$, and hence $[a,b]$ is
  not proper, since there does not exist $c\in S^d$ such that
  $(a_1,\dots,a_{k-1},a_k)\prec c\prec (a_1,\dots,a_{k-1},a_k+1)$ and $l(c)<k$. 
  So, $a_k=4$. If $a=(\underbrace{3,\dots,3}_{k-1},4)$, then
  $b=(\underbrace{3,\dots,3}_{l-1},4)$, where $l=l(b)$, and the conclusion
  follows. So, let
  \[a=(a_1,\dots,a_{i-1},1,\underbrace{3,\dots,3}_{k-i-1},4)\,,\]   
  where $1 \le i \le k-1$, and
  \[a'=(a_1,\dots,a_{i-1},3,\underbrace{1,\dots,1}_{k-i})\,.\] 
  The elements of the interval $[a,a']$ are the following (given in increasing
  order):
  \begin{align*}
    &(a_1,\dots,a_{i-1},1,\underbrace{3,\dots,3}_{k-i-1},4),(a_1,\dots,a_{i-1},1,\underbrace{3,\dots,3}_{k-i-2},4),\dots,(a_1,\dots,a_{i-1},2),\\
    &(a_1,\dots,a_{i-1},3),(a_1,\dots,a_{i-1},3,1),\dots,(a_1,\dots,a_{i-1},3,\underbrace{1,\dots,1}_{k-i}).
  \end{align*}
  Since $[a,b]$ is proper, it holds $b \prec a'$. Additionally, since $[a,b]$
  does not contain an element of length equal to $l(b)$, $b$ is an element of
  $[a,a']$ from the first row of the given list. Now it is clear that $[a,b]$
  contains zero flat steps.
\end{proof}

For an interval $[a,b]$ in $S^d$, let $P_{[a,b]}$ be the path of $G_d$ induced
by $\bigcup_{a\preq c\prec b} V(P_{c})$. Since path $P_c$ is of odd length if
and only if $l(c)=l(s(c))$, path $P_{[a,b]}$ is of odd length if and only if
$[a,b]$ contains an odd number of flat steps.

\begin{theorem}\label{theo:gd-ehf}
  The graph $G_d$ is (diamond, even hole)-free for all $d \ge 1$ and
  $G_d$ has no clique cutset for all $d \ge 2$.
\end{theorem}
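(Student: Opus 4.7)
My plan is to prove the three parts of the statement separately: (i) $G_d$ is diamond-free; (ii) $G_d$ is even-hole-free; and (iii) for $d\ge 2$, $G_d$ has no clique cutset. The common backbone is the explicit description of the edges of $G_d$: every edge is either an edge of $P_d$, an edge of the form $v_ku$ with $u\in S_k$, or an edge inside the clique $W_d$. Consequently, no two elements of $S^d$ are adjacent in $G_d$; every subdivision vertex has degree exactly $2$ with both neighbours on $P_d$; and every vertex $u\in S^d$ has neighbours only in $\{v_{l(u)}\}$ together with its two path-neighbours.

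For (i), I would first show that every triangle of $G_d$ lies inside $W_d$: a subdivision vertex cannot extend to a triangle because its two neighbours on $P_d$ are not adjacent; two elements of $S^d$ never form an edge; and a center together with an $S^d$-vertex admits no common third neighbour off $W_d$. A diamond contains two triangles sharing an edge, so any induced diamond of $G_d$ would lie entirely in $W_d$; but $W_d$ induces a complete graph and has no induced diamond.

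For (ii), let $C$ be a hole of $G_d$. Since $P_d$ is acyclic, $C$ uses at least one center; and if $C$ uses three or more centers, the pairwise-adjacent centers on a cycle of length $\ge 4$ cannot all be cyclically consecutive, so some pair produces a chord. Hence $C$ uses exactly one or exactly two centers. In the one-center case write $C = v_k\,a\,P_{[a,b]}\,b\,v_k$; chordlessness forces no interior vertex of $P_{[a,b]}$ to have length $k$, so $[a,b]$ is proper with $l(a)=l(b)=k$, and Lemma~\ref{OddEqual} gives that the number $f$ of flat steps of $[a,b]$ is odd. Since a flat step contributes $3$ edges and a non-flat step contributes $2$ edges to $P_{[a,b]}$, $|E(P_{[a,b]})|=2s+f$ where $s$ is the total number of steps, and $|E(C)|=2+2s+f$ is odd. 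In the two-center case, the edge $v_kv_{k'}$ must appear in $C$ (else it is a chord), so $C = v_k\,a\,P_{[a,b]}\,b\,v_{k'}\,v_k$ with $a\in S_k$, $b\in S_{k'}$; chordlessness again makes $[a,b]$ proper with $l(a)\ne l(b)$, so Lemma~\ref{ZeroEqual} yields $f=0$, and $|E(C)|=3+2s$ is odd.

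For (iii), part (i) (or really the fact that every triangle lies in $W_d$) implies that every clique of size at least $3$ is contained in $W_d$; the remaining cliques are subcliques of $W_d$, edges $\{v_k,u\}$ with $u\in S_k$, edges of $P_d$, and singletons. I verify connectedness of $G_d\setminus K$ in each case. If $K\subseteq W_d$, then $P_d$ survives intact and every remaining center is attached to it through its corresponding $S_{k'}$; if $K$ is a singleton or an edge lying on $P_d$, then $P_d\setminus K$ has at most two components, and for $d\ge 2$ each of them contains elements of $S^d$, so they are linked through $W_d$. The only delicate case is $K=\{v_k,u\}$ with $u\in S_k$: here I need that each component of $P_d\setminus\{u\}$ contains some element of $S^d$ whose length is different from $k$, so that it still connects to a surviving center. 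For $k\ge 2$ this is immediate because the endpoints $(1)$ and $(4)$ of $P_d$ have length $1\ne k$ and lie in the two components; for $k=1$ I would perform a direct check on $u\in\{(1),(2),(3),(4)\}$, using that $d\ge 2$ supplies length-$2$ elements such as $(1,1)$ and $(3,4)$ near the two ends of $P_d$. The one spot where care is needed is the parity bookkeeping in (ii): the statements of Lemmas~\ref{OddEqual} and \ref{ZeroEqual} are tailored exactly so that the $2$ or $3$ extra edges attached by the centers flip parity correctly, and once the case distinction is in place the deduction is routine.
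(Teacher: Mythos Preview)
Your proposal is correct and follows essentially the same approach as the paper: the diamond-freeness and even-hole-freeness arguments are the same (using that at most two centers can lie on a hole and then invoking Lemmas~\ref{OddEqual} and~\ref{ZeroEqual} with the identical parity bookkeeping), and your clique-cutset analysis is the same case split, only organised slightly differently. The paper handles the mixed case $K=\{v_i,u\}$ by first ruling out $u\in\{(1),(4)\}$ and then forcing $i=1$ and $u\in\{(2),(3)\}$, whereas you split on $k\ge 2$ versus $k=1$; both routes are straightforward and equivalent in spirit.
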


\begin{proof}
  First, suppose that $G_d$ contains a diamond $D$ for some $d \ge 1$. Since
  $P_d$ is a path, $V(D)\not\subseteq V(P_d)$, and since $D$ is not a clique
  $V(D)\not\subseteq W_d$. The neighborhood in $P_d$ of every vertex of $W_d$
  is a stable set, so $|V(D)\cap V(P_d)| \le 2$. On the other hand, every
  vertex of $P_d$ is adjacent to at most one vertex of $W_d$, so $|V(D)\cap
  V(W_d)| \le 2$. Hence, $|V(D)\cap V(P_d)|=|V(D)\cap W_d|=2$. But then $D$
  has at most 4 edges, a contradiction.
  
  Now, suppose that $G_d$ contains an even hole $H$ for some $d \ge 1$. Since
  $P_d$ is a path, $V(H)\cap W_d \ne \es$, and since $W_d$ is a clique
  $|V(H)\cap V(W_d)| \le 2$. First suppose that $V(H)\cap V(W_d)=\{v_k\}$,
  for some $1 \le k \le d$. Since $v_k$ has exactly two neighbors in $H$,
  $V(H)=\{v_k\}\cup V(P_{[a,b]})$, where $a,b\in S^d$ are such that
  $l(a)=l(b)=k$ and the interval $[a,b]$ is proper. Then, by Lemma
  \ref{OddEqual}, interval $[a,b]$ contains an odd number of flat steps, and
  hence path $P_{[a,b]}$ and hole $H$ are of odd length, a contradiction. So,
  $V(H)\cap V(W_d)=\{v_k,v_l\}$, for some $1 \le k<l \le d$. Then
  $V(H)=\{v_k,v_l\}\cup V(P_{[a,b]})$, where $a,b\in S^d$ are such that
  $\{l(a),l(b)\}=\{k,l\}$ and the interval $[a,b]$ is proper. Then, by Lemma
  \ref{ZeroEqual}, interval $[a,b]$ does not contain a flat step, and hence
  path $P_{[a,b]}$ is of even length, i.e.\ the hole $H$ is of odd length
  (since the length of $H$ is by 3 larger than the length of $P_{[a,b]}$), a
  contradiction.

%Therefore, the class $(G_d)_{d \ge 1}$ is (diamond, even hole)-free. 
  
  Let $d \ge 2$ and suppose that $G_d$ has a clique cutset $K$. We distinguish
  between three cases. First, if $K \subseteq W_d$ then $K$ does not separate
  since $P_d$ is a path and every vertex in $W_d \sm K$ has a neighbor in
  $P_d$. Second, if $K \subseteq V(P_d)$ then $P_d - K$ has two components.
  In $G_d - K$ these are connected via $W_d$ since $d \ge 2$. Hence we are in
  the third case and may assume $K \cap W_d \ne \es$ and $K \cap V(P_d) \ne
  \es$. By construction, no vertex of $P_d$ is contained in a triangle, and
  hence $|K| \le 2$. Consequently $K=\{u,v_i\}$ for $u \in V(P_d)$ and $1 \le
  i \le d$. The vertex $u$ is neither (1) nor (4) since both are adjacent to
  $v_1 \in W_d$ and neither $\{(1),v_1\}$ nor $\{(4),v_1\}$ are cutsets of
  $G_d$. It follows that (1) and (4) are separated by $K$. Since $v_1$ is
  adjacent to both (1) and (4) we have $i=1$, and hence $u$ is (2) or (3).
  But then $v_2$ has a neighbor in both components of $P_d - u$, a
  contradiction.
\end{proof}

\section{Lower bound}\label{sec:lower-bound}
%        ===========

In this section we prove that the rank-width of the class $(G_d)_{d \ge 1}$
constructed in the previous section is unbounded.

\begin{lemma} \label{lem:suffix}
  If $d \ge 1$ and $P$ is a subpath of $P_d$ such that $|V(P)\cap S_i| \ge 3$ 
  for some $i$ $(1 \le i \le d)$,
  then $V(P)\cap S_j \ne \es$ for every $j$ satisfying $i \le j \le d$.
\end{lemma}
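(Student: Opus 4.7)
The plan hinges on a simple structural observation: because $P_d$ is obtained from $P'_d$ by subdividing edges, traversing $P_d$ visits the elements of $S^d$ in lex order (with subdivision vertices in between). Any subpath $P$ of $P_d$ therefore meets $S^d$ in a $\preq$-interval. So if $|V(P)\cap S_i|\ge 3$, pick $u_1\prec u_2\prec u_3$ in $V(P)\cap S_i$; then the whole lex interval $[u_1,u_3]\subseteq S^d$ lies in $V(P)$, and it suffices to show $[u_1,u_3]\cap S_j\ne\es$ for every $i\le j\le d$.

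For $j=i$ take $u_1$. For $i<j\le d$, the plan is to locate some $u\in S_i\cap[u_1,u_3)$ whose last coordinate lies in $\{1,3\}$, and then use the extension $q:=(u,1,1,\ldots,1)\in S_j$. Such a $q$ lies in $[u_1,u_3]$: $q$ extends $u$, so $q\succ u\succeq u_1$; and since $u$ and $u_3$ both have length $i$ and $u\prec u_3$, their first disagreement is at some position $r\le i$ with $u_r<u_{3,r}$, whence $q_r=u_r<u_{3,r}$ and $q\prec u_3$.

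To locate $u$, split on the last coordinate of $u_1$. If it is $1$ or $3$, take $u=u_1$. Otherwise let $u^\ast$ denote the $\preq$-smallest element of $S_i$ strictly greater than $u_1$, which exists because $u_2\succ u_1$ lies in $S_i$. If $u_1$ ends in $2$, then $s(u_1)=(a_1,\ldots,a_{i-1},3)\in S_i$, so $u^\ast=s(u_1)$ ends in $3$. If $u_1=(a_1,\ldots,a_{i-1},4)$ ends in $4$, we must ``carry'' past position $i$: $s(u_1)=(a_1,\ldots,a_{k-1},3)\in S_k$ for the largest $k<i$ with $a_k=1$, and iterating $s$ a few more times appends $1$'s up to length $i$, yielding $u^\ast=(a_1,\ldots,a_{k-1},3,1,\ldots,1)\in S_i$, ending in $1$. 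In either subcase, minimality of $u^\ast$ together with $u_1\prec u_2\in S_i$ gives $u^\ast\preq u_2\prec u_3$, so $u:=u^\ast\in[u_1,u_3)$. The essential role of the third element $u_2$ is precisely to secure this strict inequality $u^\ast\prec u_3$: without a third element of $S_i$, the interval $[u_1,u_3]$ could collapse to $\{u_1,u_3\}$ with $u_3=u^\ast$, leaving no room for an $S_j$-element with $j>i$. The main (mild) obstacle is the carry case, which is a finite check from the definition of $\preq$.
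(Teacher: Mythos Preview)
Your proof is correct and follows essentially the same idea as the paper's: both exploit that $V(P)\cap S^d$ is a $\preq$-interval and locate inside it an element of $S_i$ with odd last coordinate whose length-$j$ extensions remain in the interval. The paper's version is a two-line sketch (it directly asserts that three elements of $S_i$ in $P$ must contain a pair $(a,1),(a,2)$ or $(a,3),(a,4)$ with common prefix, and then appeals to the definition of $\preq$), whereas you have written out the case analysis explicitly; but the underlying argument is the same.
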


\begin{proof}
  Since $|V(P)\cap S_i| \ge 3$, there exist two vertices of the form
  $(a_1,\dots,a_{i-1},1)$ and $(a_1,\dots,a_{i-1},2)$, or two vertices of the
  form $(a_1,\dots,a_{i-1},3)$ and $(a_1,\dots,a_{i-1},4)$ in $P$, where
  $a_k\in \{1,3\}$ for $1 \le k < i$. But then, by the definition of the order
  $\preq$ for $S^d$, $P$ must contain some vertex of length $j$ for every $j$
  satisfying $i \le j \le d$.
\end{proof}

\begin{lemma} \label{lem:large-color}
  If $P$ is a subpath of $P_d$ such that $|V(P)| \ge c |V(G_d)|$, where
  $0<c<1$ and $d>2\lfloor\log_2{\frac{1}{c}}\rfloor+4$,
  then $V(P)\cap S_j \ne \es$ for every $j$ satisfying 
  $\lfloor\log_2{\frac{1}{c}}\rfloor+3 \le j\le d$.
\end{lemma}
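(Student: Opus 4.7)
The plan is to argue by contradiction. Suppose there is some $j_0$ with $m \le j_0 \le d$, writing $m := \lfloor\log_2(1/c)\rfloor + 3$, such that $V(P) \cap S_{j_0} = \emptyset$; I aim to derive a contradiction by showing that $|V(P)|$ must then be too small to satisfy the hypothesis. The starting observation is that $P_d$ traverses $S^d$ in lexicographic order with at most two subdivision vertices between consecutive entries, so any subpath of $P_d$ containing $N$ elements of $S^d$ has at most $3N + 2$ vertices. It therefore suffices to bound $|V(P) \cap S^d|$.

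Since $V(P)$ is contiguous in $P_d$ and contains no element of $S_{j_0}$, the set $V(P) \cap S^d$ lies in a single lex-``gap'' bounded by two consecutive elements of $S_{j_0}$ (or at a lex-boundary). I would then classify such gaps according to the last coordinate $b$ of the left endpoint $u = (\vec a, b) \in S_{j_0}$. For $b \in \{1, 3\}$ the gap is exactly the descendant subtree of $u$ in $S^d$, of size $4(2^{d - j_0} - 1)$. For $b = 2$ the gap is empty. For $b = 4$, the lex-successor must be reached by backtracking up the chain of ancestors whose last coordinate is $3$ to the nearest ancestor with last coordinate $1$, then moving to its sibling $2$, then $3$, and descending leftmost down to level $j_0$; this contributes at most $2(j_0 - 1)$ elements. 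The lex-prefix (before the first element of $S_{j_0}$) and lex-suffix (after the last) have size at most $j_0 - 1$. Taking the maximum, any such gap has at most $\max\{4 \cdot 2^{d - j_0},\, 2 j_0\}$ elements of $S^d$. The hypothesis $d > 2\lfloor\log_2(1/c)\rfloor + 4 = 2m - 2$ gives $d \ge 2m - 1$, from which a short calculation (using $m \ge 3$) shows $2^{d - m + 1} > d \ge j_0$, so the interior term dominates. Hence the gap has at most $4 \cdot 2^{d - m}$ elements of $S^d$, giving $|V(P)| \le 12 \cdot 2^{d - m} + 2$.

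For the contradiction, Remark~\ref{rem:size} yields $|V(G_d)| \ge 2^{d + 2}$, hence $|V(P)| \ge 4c \cdot 2^d$. Setting $k := \lfloor\log_2(1/c)\rfloor$, we have $c > 2^{-(k+1)}$, so $4c \cdot 2^m > 4 \cdot 2^{-(k+1)} \cdot 2^{k+3} = 16$. Therefore $|V(P)| > 16 \cdot 2^{d - m}$, and since $d > m$ implies $2^{d - m} \ge 2$, this exceeds $12 \cdot 2^{d - m} + 2$, contradicting the bound from the previous step. The main obstacle I anticipate is the backtracking subcase ($b = 4$): one must carefully trace the lex-successor of $u = (\vec a, 4)$ through the chain of ancestors ending in $3$, up to the nearest ancestor ending in $1$, and down again, in order to pin down the $2(j_0 - 1)$ bound and verify that it is dominated by the interior estimate $4 \cdot 2^{d - j_0}$ thanks to the assumption $d > 2m - 2$.
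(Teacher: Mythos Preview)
Your argument is correct and a bit more streamlined than the paper's. Both proofs ultimately bound the number of $S^d$-elements in a ``gap'' between two consecutive members of some level $S_{j_0}$ (or $S_i$) and compare it to the lower bound $c|V(G_d)|\ge c\cdot 2^{d+2}$ from Remark~\ref{rem:size}. The case analysis on the last coordinate $b\in\{1,2,3,4\}$ of the left endpoint, including the backtracking bound $2(j_0-1)$ for $b=4$, matches the paper's Claim~3 analysis.

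Where the two diverge is in the setup. The paper first proves $|V(P)|>6d$ (Claim~1) and $|V(P)\cap S_t|\ge 3$ for some $t$ (Claim~2), then invokes Lemma~\ref{lem:suffix} and Remark~\ref{rem:continuous} to obtain a \emph{threshold} level $i$ with $V(P)\cap S_j=\varnothing$ for all $j\le i$ and $V(P)\cap S_j\ne\varnothing$ for all $j>i$; only then does it restrict to the $b\in\{1,3\}$ case and bound $i<2+\log_2(1/c)$. You bypass all of this: you fix an arbitrary missed level $j_0\ge m$, bound every gap type by $\max\{4\cdot 2^{d-j_0},2j_0\}\le 4\cdot 2^{d-m}$ (using $d\ge 2m-1$ so that $2^{d-m+1}>d\ge j_0$), and finish with $16\cdot 2^{d-m}>12\cdot 2^{d-m}+2$. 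This avoids Lemma~\ref{lem:suffix} entirely and folds the $b\in\{2,4\}$ cases into the same estimate rather than ruling them out in advance. The paper's route has the side benefit of exhibiting the downward-closed structure of the missed levels, but for the bare statement of the lemma your direct argument is shorter.
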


\begin{proof}
  If $V(P)\cap S_j \ne \es$ for every $j\in\{1,\dots,d\}$, then the
  conclusion trivially holds. Hence, we may assume that $V(P)\cap S_j = \es$
  for some $j \in \{1,\dots,d\}$.

% \vspace{2ex}

  \begin{enumerate}[{Claim }1.]
% \noindent\emph{Claim 1.} $|V(P)|>6d$.
  \item\label{claim1} $|V(P)|>6d$.
    
    \emph{Proof of Claim \ref{claim1}:} Suppose that $|V(P)| \le 6d$.   
    Since $|V(P)| \ge c |V(G_d)| \ge c\cdot 2^{d+2}$ (the first inequality is
    by the assumption, and the second by Remark \ref{rem:size}), it follows
    that $6d \ge c\cdot 2^{d+2}$, which is equivalent to $\log_2{\frac{1}{c}}
    \ge d - \log_2{d} +2-\log_2{6}$. Since $d-\log_2{d} \ge \frac{d}{2}$, for
    all $d \ge 4$ (which is the case by assumption), and $2-\log_2{6}>-1$, we
    have that $\log_2{\frac{1}{c}}>\frac{d}{2}-1$, which is equivalent to
    $d<2\log_2{\frac{1}{c}}+2$, a contradiction. This completes the proof of
    Claim \ref{claim1}.

% \vspace{2ex}

% \noindent \emph{Claim 2.} For some $t\in\{1,\dots,d\}$, $|V(P)\cap S_t|\ge3$.
  \item\label{claim2}  For some $t \in \{1,\dots,d\}$, $|V(P)\cap S_t| \ge 3$.
    
    \emph{Proof of Claim \ref{claim2}:} 
    Suppose that for all $t \in \{1,\dots,d\}$, $|V(P)\cap S_t| \le 2$.
    Let $a'$ and $b'$ be the endnodes of $P$, and let $a$ (resp.~$b$) be the
    first (resp.\ last) vertex of $S^d$ encountered when traversing $P$ from
    $a'$ to $b'$. Since for some $j$, $V(P)\cap S_j=\es$, the interval
    $[a,b]$ contains at most $d-2+1+d-2=2d-3$ steps (note that this bound can
    be achieved when $[a,b]$ contains vertices $(2)$ and $(3)$, the $d-2$
    elements of $S^d$ that precede $(2)$, and the $d-2$ elements of $S^d$ that
    succeed $(3)$). For each step $[u,s(u)]$, the $u,s(u)$-subpath of $P$ is
    of length at most three. The $a,a'$-subpath of $P$ and the $b,b'$-subpath
    of $P$ are each of length at most two. It follows that the length of $P$
    is at most $3(2d-3)+2 \cdot 2 = 6d-5$, and hence $|V(P)| \le 6d$,
    contradicting Claim \ref{claim1}. 
    This completes the proof of Claim \ref{claim2}.
 
% \vspace{2ex}
 
    By Claim \ref{claim2} and Lemma \ref{lem:suffix}, for some $i<d$,
    $V(P)\cap S_i=\es$ and $V(P)\cap S_j \ne \es$ for $j \in \{i+1,\dots,d\}$. 
    By Remark~\ref{rem:continuous}, $V(P)\cap S_j=\es$ for $j\in \{1,\dots,i\}$.
    Therefore, there exist two vertices $u,v \in S_i$, $u \preq v$, such that
    $P$ is contained in the subpath $P'$ of $P_d$ from $u$ to $v$ and 
    $V(P') \cap S_i = \{u,v\}$. Let $u=(a_1,\dots,a_i)$.
 
%  \vspace{2ex}
 
% \noindent\emph{Claim 3. $a_i\in \{ 1,3\}$.}
  \item\label{claim3} $a_i\in \{ 1,3\}$.

  \emph{Proof of Claim \ref{claim3}:}
  We consider the following cases:
  \begin{itemize}
  \item If $a_i=2$ then $v=s(u)$. Hence, $|V(P')|= 4$.
  \item If $a_i=4$, then $u=(a_1,\dots,a_{i'-1},1,3,\dots,3,4)$, where     
    $1 \le i' \le i-1$ ($u$ has at least one coordinate equal to $1$,
    otherwise there does not exist a tuple in $S_i$ which is larger than $u$).
    Since $v$ is the next element in $S_i$ which is larger than $u$,
    $v=(a_1,\dots,a_{i'-1},3,1,\dots,1)$. By the discussion in the proof of
    Lemma \ref{ZeroEqual}, the number of elements of $S^d$ in the interval
    $[u,v]$ is $2(i-i'+1)$ and we have that $2(i-i'+1) \le 2i \le 2d$. Since
    there are at most two vertices of $P'$ between any two consecutive
    elements in $S_d$, $|V(P')| \le 3\cdot 2d = 6d$.
  \end{itemize} 
  Both cases contradict Claim \ref{claim1}.
  This completes the proof of Claim \ref{claim3}.
 \end{enumerate}
% \vspace{2ex}
 
 Since there are at most two vertices of $P'$ between any two consecutive
 elements in $S_d$ and by Claim~\ref{claim3}, $|V(P')| \le 3|[u,v]| =
 3(\sum_{j=0}^{d-i-1} 4\cdot 2^j+2) < 12(\sum_{j=0}^{d-i-1}2^j + 1) = 
 12\cdot 2^{d-i} < 2^{d-i+4}$. So by Remark~\ref{rem:size}, we have that
 \[ 2^{d-i+4} > |V(P')| \ge |V(P)| \ge c |V(G_d)|  \ge c\cdot 2^{d+2}\,. \]
 Hence $2^{2-i}>c$, or equivalently $i<2+\log_2{\frac{1}{c}}$,
 proving the lemma.
\end{proof}

\begin{lemma}\label{lem:rw-unbounded}
  For any $d \ge 22$ we have $\rw(G_d)> d/3$.
\end{lemma}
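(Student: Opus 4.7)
The plan is a standard rank-width lower bound via a balanced edge in a rank decomposition. Suppose for contradiction that $\rw(G_d) \le d/3$, and let $(T,\lambda)$ be a rank decomposition of $G_d$ of width $k \le d/3$. By Lemma~\ref{lem:balanced-edge}, $T$ has a balanced edge $e$, inducing a bipartition $(X,Y)$ of $V(G_d)$ via $\lambda$. Applying Lemma~\ref{lem:heavy-subpath} to the induced path $P_d \subseteq G_d$ (for which $|V(G_d)|-|V(P_d)| = d$) and using $3(k+1) \le d+3$, I obtain connected subpaths $P_X \subseteq P_d[X]$ and $P_Y \subseteq P_d[Y]$, each of size at least $\lfloor (|V(G_d)|-3d)/(d+3) \rfloor$. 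By Remark~\ref{rem:size}, $|V(G_d)| \ge 2^{d+2}$ dominates $3d$ for $d \ge 22$, so a short computation shows that $|V(P_X)|, |V(P_Y)| \ge c|V(G_d)|$ with $c := 1/(2(d+3))$.

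Next, I would apply Lemma~\ref{lem:large-color} to $P_X$ and $P_Y$ with this $c$. The hypothesis $d > 2\lfloor \log_2(1/c) \rfloor + 4$ reduces to $d > 2\log_2(d+3) + 6$, which is easily checked for $d \ge 22$. Setting $j_0 := \lfloor \log_2(2(d+3)) \rfloor + 3$ and $J := \{j_0,\dots,d\}$, the lemma yields that for every $j \in J$ the set $S_j$ contains a vertex in $X$ (inherited from $V(P_X)$) and a vertex in $Y$ (inherited from $V(P_Y)$).

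Finally, I would exhibit a large identity submatrix in $A_{G_d}[X,Y]$ using the centers. By pigeonhole, one of $\{j \in J : v_j \in X\}$ or $\{j \in J : v_j \in Y\}$ has size at least $\lceil |J|/2 \rceil$; by symmetry assume the former, and call it $J'$. For each $j \in J'$ select a vertex $u_j \in S_j \cap Y$. Consider the submatrix $M$ of $A_{G_d}[X,Y]$ with rows $\{v_j : j \in J'\}$ and columns $\{u_j : j \in J'\}$. By construction each center $v_j$ is adjacent in $V(P_d)$ precisely to the elements of $S_j$ (subdivision vertices have no center-neighbour), so $v_j$ is adjacent to $u_k$ if and only if $j=k$; hence $M$ is a permutation of the identity matrix, with rank exactly $|J'|$. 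Thus the width of $e$ is at least $|J'| \ge \lceil |J|/2 \rceil$, and a direct numerical check shows $\lceil |J|/2 \rceil > k$ for all $d \ge 22$: at $d=22$ one has $j_0 = 8$, $|J| = 15$, and $\lceil 15/2 \rceil = 8 > 7 \ge k$, with the margin only widening as $d$ grows. This contradicts $\width(T,\lambda) \le k$ and proves the lemma.

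The main obstacle is the numerical bookkeeping that pins down the threshold $d = 22$: the two inequalities $d > 2\log_2(d+3) + 6$ (needed to invoke Lemma~\ref{lem:large-color}) and $\lceil |J|/2 \rceil > d/3$ (needed to finish the rank contradiction) have to be satisfied simultaneously, and $d = 22$ is essentially the first value at which both hold comfortably. Everything else in the proof is mechanical once the balanced edge and the heavy-subpath machinery of the previous section are in hand.
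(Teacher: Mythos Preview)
Your proof is correct and follows essentially the same route as the paper: balanced edge, then Lemma~\ref{lem:heavy-subpath} to get long subpaths on each side, then Lemma~\ref{lem:large-color} to guarantee that all deep layers $S_j$ are hit on both sides, then pigeonhole on the centres to produce an identity submatrix in $A_{G_d}[X,Y]$. The only cosmetic difference is that the paper keeps the constant $c = 1/(4(k+1))$ in terms of $k$ throughout, whereas you first bound $3(k+1) \le d+3$ and carry $c = 1/(2(d+3))$; this shifts the numerics slightly but leads to the same contradiction at $d \ge 22$.
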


\begin{proof}
  Suppose that $\rw(G_d) \le k = d/3$. Let $(T,\lambda)$ be a rank
  decomposition of $G_d$ of width at most $k$. Let $e\in E(T)$ be a balanced
  edge (it exists by Lemma \ref{lem:balanced-edge}), and let $M$ be the
  adjacency matrix of $G_d$. Let $(X,Y)$ be the bipartition of $V(G_d)$
  induced by $e$. Appying Lemma \ref{lem:heavy-subpath} for $G_d$ and the path
  $P_d$ ($|V(G_d)|-|V(P_d)|=d$), there exist two subpaths $P_X$, $P_Y$ of
  $P_d$ in $G_d[X]$ and $G_d[Y]$, respectively, such that $|V(P_X)|,|V(P_Y)|
  \ge \left\lfloor\frac{|V(G_d)|-3d}{3(k+1)}\right\rfloor \ge
  \frac{|V(G_d)|}{4(k+1)}$ (note that the second inequality holds by Remark
  \ref{rem:size} and the fact that $d \ge 22$). Applying Lemma
  \ref{lem:large-color} (using the fact that $d \ge 22$) with
  $c=\frac{1}{4(k+1)}$ and letting $c'=\lfloor\log_2(\frac{1}{c})\rfloor +3 =
  \lfloor \log_2(k+1) \rfloor +5$, we have $V(P_X)\cap S_j \ne \es$ and
  $V(P_Y) \cap S_j \ne \es$ for every $j$ satisfying $c' \le j \le d$.
  W.l.o.g.\ let $X$ be the set containing at least half of the center vertices
  in $\{v_{c'},\dots,v_d\}$. Let $I = \{i \in \{c',\dots,d\} \mid v_i\in X\}$
  (the set of indices of center vertices in $X$), and fix a vertex
  $a_i \in Y \cap S_i$ for every $i \in I$, which exists because
  $V(P_Y)\cap S_i \ne \es$.
% by remark \ref{rem:long-enough-paths}.
  We have $|I| \ge \frac{d-c'+1}{2}$. Let $S_X=\{v_i \mid i\in I\}$ and 
  $S_Y=\{a_i \mid i\in I\}$.
%, where $a_i$ is some vertex in $Y\cap S_i$ ($a_i$ exists because $P_Y\cap S_i \ne \es$).
  Note that $S_X\subseteq X$ and $S_Y\subseteq Y$. Because each vertex
  $v_i$ in $S_X$ has exactly one neighbor in $S_Y$ (namely $a_i$),
  we have that $M[S_X,S_Y]=\mathbf{1}_{|I|}$ (identity matrix). Therefore,
  $\rank(M[S_X,S_Y])=|I|$. We have 
  \[ k \ge \width(T,\lambda) \ge \cutrk{G}(X) = \rank(M[X,Y])
       \ge \rank(M[S_X,S_Y]) =   |I| \ge \frac{d-c'+1}{2}\,,\]
  which is equivalent to
  $d \le 2k+c'-1 = 2k + \lfloor\log_2(k+1)\rfloor + 4
                 = 2d/3 + \lfloor\log_2(d/3+1)\rfloor + 4$,
   a contradiction since $d \ge 22$.
\end{proof}

From Lemma~\ref{lem:rw-unbounded} and Remark~\ref{rem:size} we obtain that 
the rankwidth of $G_d$ grows at least logarithmically with $|V(G_d)|$, 
since if $d \ge 22$ then $\rw(G_d ) > d/3 \ge (\log_2 |V(G_d)| - 4)/3$. 
From Theorem~\ref{theo:gd-ehf} and Lemma~\ref{lem:rw-unbounded} we have the
following theorem.

% \begin{corollary}\label{cor:rw-unbounded}
%   If $d \ge 22$, then $\rw(G_d )  \ge d/3  \ge (\log_2 |V(G_d)| - 4)/3$.
% \end{corollary}

% From Theorem~\ref{theo:gd-ehf} and Corollary~\ref{cor:rw-unbounded}
% we obtain the following theorem.

\begin{theorem}
  The family of (diamond, even hole)-free graphs $G_d$, $d \ge 2$, without
  clique cutsets has unbounded rank-width.
\end{theorem}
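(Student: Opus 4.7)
The plan is simply to assemble the pieces that have already been proved. Theorem~\ref{theo:gd-ehf} supplies, for every $d\ge 2$, a graph $G_d$ that is (diamond, even hole)-free and has no clique cutset, so the family $(G_d)_{d\ge 2}$ does consist of graphs in the promised class. It remains to show the rank-width of this family is unbounded.

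For that, I would invoke Lemma~\ref{lem:rw-unbounded}, which gives $\rw(G_d) > d/3$ for every $d \ge 22$. Since $d/3 \to \infty$ as $d \to \infty$, no constant $k\in\IN$ can bound $\rw(G_d)$ uniformly over the subfamily $(G_d)_{d\ge 22}$, and a fortiori over the full family $(G_d)_{d\ge 2}$. This matches the definition of unbounded rank-width given in Section~2.

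There is essentially no obstacle here: the work has already been done in the preceding theorem and lemma. The only thing to check is that the tail of the family (from $d=22$ onwards) is sufficient to witness unboundedness, which follows immediately from the definition of a class having unbounded rank-width, since removing finitely many graphs from a family does not change whether the rank-widths are bounded by a constant. Optionally, one could add the remark (already flagged just before the theorem statement) that $\rw(G_d) \ge (\log_2 |V(G_d)| - 4)/3$ via Remark~\ref{rem:size}, to emphasise that the growth is at least logarithmic in the number of vertices, but this is not needed for the statement as written.
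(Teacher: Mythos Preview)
Your proposal is correct and matches the paper's own argument exactly: the paper simply states that the theorem follows from Theorem~\ref{theo:gd-ehf} and Lemma~\ref{lem:rw-unbounded}, which is precisely what you assemble. Your additional remarks about the tail of the family and the optional logarithmic growth observation are accurate and harmless.
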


For completeness, observe that $\rw(G_d) \le d+1$ for all $d\in\IN$.
To see this, take a cubic tree $T$ with $\left| V(G_d)\right|$ leaves,
where the internal nodes form a path. Via the bijection
$\lambda: V(G_d) \to L(T)$, pick the linear ordering on 
$W_d \cup V(P_d)$, which starts with $v_1, v_2, v_3, \dots, v_d$, followed
by the vertices of $P_d$ in their canonical order (see Figure
\ref{fig:dec}).

\begin{figure}[htbp]
  \begin{center}
    \newcommand{\vv}[1]{{\phantom{(}$v_{#1}$\phantom{)}}}
    \tikzset{v/.style={circle, draw=black, minimum size=1.5mm, inner sep=0pt}}
    \begin{tikzpicture}[scale=0.75]
      \node[v, label=below:\vv{1}] (v1) at (0,0) {};
      \node[v, label=below:\vv{2}] (v2) at (1,0) {};
      \node[v] (pv2) at (1,1) {}; \draw (v2)--(pv2);
      \draw[rounded corners=5mm] (v1)--(0,1)--(pv2);
      \node[v, label=below:\vv{3}] (v3) at (2,0) {};
      \node[v] (pv3) at (2,1) {}; \draw (v3)--(pv3)--(pv2)  (pv3)--(2.5,1);
      \node[v, label=below:\vv{d}] (vd) at (4,0) {};
      \node[v] (pvd) at (4,1) {}; \draw (vd)--(pvd)--(3.5,1);
      \draw[dotted] (2.5,1)--(3.5,1);
      \node[v, label=below:{(1)}] (1) at (5,0) {};
      \node[v] (p1) at (5,1) {}; \draw (1)--(p1)--(pvd);
      \node[v] (1a) at (6,0) {};
      \node[v] (p1a) at (6,1) {}; \draw (1a)--(p1a)--(p1);
      \node[v, label=below:{(1,1)}] (11) at (7,0) {};
      \node[v] (p11) at (7,1) {}; \draw (11)--(p11)--(p1a);
      \node[v] (11a) at (8,0) {};
      \node[v] (p11a) at (8,1) {}; \draw (11a)--(p11a)--(p11);
      \node[v, label=below:{(1,1,1)}] (111) at (9,0) {};
      \node[v] (p111) at (9,1) {}; \draw (111)--(p111)--(p11a)  (p111)--(9.5,1);
      \node[v, label=below:{(3,3,4)}] (334) at (12,0) {};
      \node[v] (p334) at (12,1) {}; \draw (334)--(p334)--(11.5,1);
      \draw[dotted] (9.5,1)--(11.5,1);
      \node[v] (334a) at (13,0) {};
      \node[v] (p334a) at (13,1) {}; \draw (334a)--(p334a)--(p334);
      \node[v, label=below:{(3,4)}] (34) at (14,0) {};
      \node[v] (p34) at (14,1) {}; \draw (34)--(p34)--(p334a);
      \node[v] (34a) at (15,0) {};
      \node[v] (p34a) at (15,1) {}; \draw (34a)--(p34a)--(p34);
      \node[v, label=below:{(4)}] (4) at (16,0) {};
      \draw[rounded corners=5mm] (4)--(16,1)--(p34a);
    \end{tikzpicture}
    \caption{A rank decomposition of $G_d$ of width at most $d+1$.}
    \label{fig:dec}
  \end{center}
\end{figure}

Let $e$ be an edge of $T$ and let $(X,Y)$ be the bipartition of $V(G_d)$
induced by $e$. Since $\rank(M[X,Y]) \le \min(|X|,|Y|)$ we may assume
$|X|,|Y| > d$ and $\{v_1,v_2,\dots,v_d\} \subseteq X$. Now the vertices in $Y$
have at most $d+1$ different neighbours in $X$. Hence the width of $e$ is at
most $d+1$, proving that $\rw(G_d) \le d+1$.

\bibliographystyle{amsplain}
\bibliography{rw-ehf}

\providecommand{\bysame}{\leavevmode\hbox to3em{\hrulefill}\thinspace}
\providecommand{\MR}{\relax\ifhmode\unskip\space\fi MR }
% \MRhref is called by the amsart/book/proc definition of \MR.
\providecommand{\MRhref}[2]{%
  \href{http://www.ams.org/mathscinet-getitem?mr=#1}{#2}
}
\providecommand{\href}[2]{#2}
\begin{thebibliography}{10}

\bibitem{actv}
P.~Aboulker, P.~Charbit, N.~Trotignon, and K.~Vu{\v{s}}kovi\'c, \emph{Vertex
  elimination orderings for hereditary graph classes}, Discrete Mathematics
  \textbf{338} (2015), no.~5, 825--834.

\bibitem{achrs}
L.~Addario-Berry, M.~Chudnovsky, F.~Havet, B.~A. Reed, and P.~D. Seymour,
  \emph{Bisimplicial vertices in even-hole-free graphs}, Journal of
  Combinatorial Theory Series B \textbf{98} (2008), 1119--1164.

\bibitem{cshv}
K.~Cameron, M.~V.~G. da~Silva, S.~Huang, and K.~Vu{\v{s}}kovi\'c,
  \emph{Structure and algorithms for (cap, even hole)-free graphs}, submitted,
  2016.

\bibitem{cl}
H.-C. Chang and H.-I. Lu, \emph{A faster algorithm for recognising
  even-hole-free graphs}, Journal of Combinatorial Theory Series B \textbf{113}
  (2015), 141--161.

\bibitem{cckv-cap}
M.~Conforti, G.~Cornu\'ejols, A.~Kapoor, and K.~Vu{\v{s}}kovi\'c, \emph{Even
  and odd holes in cap-free graphs}, Journal of Graph Theory \textbf{30}
  (1999).

\bibitem{cckv-ehfrecognition}
\bysame, \emph{Even-hole-free graphs, part {II}: Recognition algorithm},
  Journal of Graph Theory \textbf{40} (2002), 238--266.

\bibitem{cgp}
M.~Conforti, B.~Gerards, and K.~Pashkovich, \emph{Stable sets and graphs with
  no even hole}, Mathematical Programming Series B \textbf{153} (2015), 13--39.

\bibitem{Courcelle90}
B.~Courcelle, \emph{Graph rewriting: An algebraic and logic approach}, Handbook
  of Theoretical Computer Science vol. B (Jan van Leeuwen, ed.), Elsevier/MIT
  Press, 1990, pp.~194--242.

\bibitem{CER-93}
B.~Courcelle, J.~Engelfriet, and G.~Rozenberg, \emph{Handle-rewriting
  hypergraph grammars}, Journal of Computer and System Sciences \textbf{46}
  (1993), 218--270.

\bibitem{CMR00}
B.~Courcelle, J.~A. Makowsky, and U.~Rotics, \emph{Linear time solvable
  optimization problems on graphs of bounded clique width}, Theory of Computing
  Systems \textbf{33} (2000), 125--150.

\bibitem{dsl}
A.~A. da~Silva, A.~Silva, and C.~Linhares-Sales, \emph{A bound on the treewidth
  of planar even-hole-free graphs}, Discrete Applied Mathematics \textbf{158}
  (2010), 1229--1239.

\bibitem{daSV}
M.~V.~G. da~Silva and K.~Vu{\v{s}}kovi\'c, \emph{Triangulated neighborhoods in
  even-hole-free graphs}, Discrete Mathematics \textbf{307} (2007), 1065--1073.

\bibitem{daSVehfdecomp}
\bysame, \emph{Decomposition of even-hole-free graphs with star cutsets and
  2-joins}, Journal of Combinatorial Theory Series B \textbf{103} (2013),
  144--183.

\bibitem{farber}
M.~Farber, \emph{On diameters and radii of bridged graphs}, Discrete
  Mathematics (1989).

\bibitem{gr}
M.~C. Golumbic and U.~Rotics, \emph{On the clique-width of some perfect graph
  classes}, International Journal of Foundations of Computer Science
  \textbf{11} (2000), 423--443.

\bibitem{OS-rw}
S.~i.~Oum and P.~D. Seymour, \emph{Approximating clique-width and
  branch-width}, Journal of Combinatorial Theory Series B \textbf{96} (2006).

\bibitem{tk}
T.~Kloks, \emph{Graphs without even holes or diamonds},
  arXiv:1104.3920v1[math.CO], 2011.

\bibitem{kmv}
T.~Kloks, H.~M{\"u}ller, and K.~Vu{\v{s}}kovi\'c, \emph{Even-hole-free graphs
  that do not contain diamonds: a structure theorem and its consequences},
  Journal of Combinatorial Theory Series B \textbf{99} (2009), 733--800.

\bibitem{kl}
N.~K. Le, in preparation.

\bibitem{mgr}
S.~E. Markossian, G.~S. Gasparian, and B.~A. Reed, \emph{$\beta$-perfect
  graphs}, Journal of Combinatorial Theory Series B \textbf{67} (1996).

\bibitem{RS-GM02}
N.~Robertson and P.~D. Seymour, \emph{Graph minors {II}: Algorithmic aspects of
  tree-width}, Journal of Algorithms \textbf{7} (1986), 309--322.

\bibitem{kv-survey1}
K.~Vu{\v{s}}kovi\'c, \emph{Even-hole-free graphs: A survey}, Applicable
  Analysis and Discrete Mathematics \textbf{4} (2010), 219--240.

\bibitem{kv-survey2}
\bysame, \emph{The world of hereditary graph classes viewed through {T}ruemper
  configurations}, London Mathematical Society Lecture Note Series, vol. 409,
  Cambridge University Press, 2013, pp.~265--325.

\end{thebibliography}

\end{document}